\begin{document}
\title{The Batched Set Cover Problem}
%
%
\author{Juan C. Mart\'{i}nez Mori \inst{1} \and
Samitha Samaranayake \inst{2}}
\authorrunning{Mart\'{i}nez Mori, J. C. and Samaranayake, S.}
%
\institute{
Systems, Cornell University \\ 
\email{jm2638@cornell.edu} 
\and
School of Civil and Environmental Engineering, Cornell University \\
\email{samitha@cornell.edu}
}
\maketitle              
\begin{abstract}

We introduce the \textit{batched set cover problem}, which is a generalization of the online set cover problem. In this problem, the elements of the ground set that need to be covered arrive in batches. Our main technical contribution is a tight $\Omega(H_{m - 2^z + 1})$ lower bound on the competitive ratio of any fractional batched algorithm given an adversary that is required to produce batches of VC-dimension at least $z$, for some $z \in \mathbb{N}^0$. This restriction on the adversary is motivated by the fact that, in some real world applications, decisions are made after collecting batches of data of non-trivial VC-dimension. In particular, ridesharing systems rely on the batch assignment of trip requests to vehicles, and some related problems such as that of optimal congregation points for passenger pickups and dropoffs can be modeled as a batched set cover problem with VC-dimension greater than or equal to two. Furthermore, we note that while any online algorithm may be used to solve the batched set cover problem by artificially sequencing the elements in a batch, this procedure may neglect the rich information encoded in the complex interactions between the elements of a batch and the sets that contain them. Therefore, we propose a minor modification to an online algorithm found in \cite{buchbinder2009design} to obtain an algorithm that attempts to exploit such information. Unfortunately, we are unable to improve its analysis in a way that reflects this intuition. However, we present computational experiments that provide empirical evidence of a constant factor improvement in the competitive ratio. To the best of our knowledge, we are the first to use the VC-dimension in the context of online (batched) covering problems.

\keywords{set cover \and batched \and online \and primal-dual \and VC-dimension \and ridesharing}
\end{abstract}

\section{Introduction}
\subsection{Background}
Let $X=\{1, \cdots, n\}$ be a ground set of $n$ elements and $\mathcal{S} = \{S_1, \cdots, S_m \}$ be a collection of $m$ subsets of $X$. A \textit{set cover} is a sub-collection of $\mathcal{S}$ such that its union is $X$. The \textit{set cover problem} is to find a minimum cardinality set cover of $X$. It is a classic NP-hard problem that is also hard to approximate to within a factor of $(1 - \alpha) \ln n $ in polynomial time for any $\alpha > 0$ \cite{feige1998threshold,irit2014analytical}.

In the online setting \cite{alon2003online,buchbinder2005online,buchbinder2009design}, the members of $\mathcal{S}$ are identified a priori, but the elements of the ground set that need to be covered, along with their respective set membership, are revealed sequentially. More precisely, the \textit{online set cover problem} consists of a game between an algorithm and an oblivious adversary; one which knows the algorithm but not the realization of any random choices\footnote{If the algorithm is deterministic, an oblivious adversary is equivalent to an adaptive adversary; one which makes requests adaptively in response to the algorithm \cite{ben1994power}.}. The adversary produces, in advance, a sequence $\sigma = \sigma_1, \sigma_2, \cdots$ of elements of $X$, which it reveals to the algorithm one at a time. Upon the arrival of an element, the algorithm must either conclude that the element is already in the set cover, or irrevocably extend the set cover with a member of $\mathcal{S}$ containing the element.

Alon et al. \cite{alon2003online} gave a deterministic $O(\log m \log n)$-competitive algorithm for the online set cover problem and a nearly matching lower bound for any online algorithm. Buchbinder and Naor \cite{buchbinder2005online,buchbinder2009design} later proposed a general scheme for the design and analysis of online algorithms, namely the primal-dual method\footnote{This scheme first appeared in the context of approximation algorithms \cite{williamson2011approximation}.}, and used it to obtain new algorithms for the online set cover problem. Their algorithms generally consist of two phases: \textit{i}) a deterministic $O(\log m)$ primal-dual subroutine for the fractional online set cover problem, which is optimal up to constant terms, and \textit{ii}) a randomized rounding procedure whose expected cost is $O(\log n)$ times the cost of the fractional solution, ultimately producing randomized $O(\log m \log n)$-competitive algorithms. The rounding procedure can be derandomized, producing deterministic $O(\log m \log n)$-competitive algorithms.

\subsection{Contributions}
Herein we introduce the \textit{batched set cover problem}, which is a generalization of the online set cover problem. However, as in \cite{buchbinder2005online,buchbinder2009design}, our focus is on its fractional counterpart; this corresponds to phase \textit{i}) of the primal-dual scheme. We immediately recover the integral case through the rounding procedures in phase \textit{ii}), which we leave untouched. In essence, the batched set cover problem differs from the online set cover problem in that the adversary produces a sequence of batches of elements of $X$. Thus, the online set cover problem is a special case of the batched set cover problem in which each element revealed by the adversary is its own batch. 

Note that the problem we consider is distinct from the capacitated online set cover problem with set requests treated by Bhawalkar et al. \cite{bhawalkar2014online}. They argue that the uncapacitated problem is not meaningful because the elements in a batch can be thought of as arriving sequentially, whereas we argue that this is not always the case. Our main technical contribution is a tight lower bound on the competitive ratio of any fractional batched algorithm given a parametrized restriction on the adversary. Specifically, if we consider adversaries that are required to produce batches of Vapnik Chervonenkis (VC)-dimension \cite{vapnik2015uniform} at least $z$, for some $z \in \mathbb{N}^0$, any fractional batched algorithm is $\Omega(H_{m - 2^z + 1})$-competitive. For $z > 0$, this bound is more generous (to the algorithm) than the $\Omega(H_m)$ bound of the online setting \cite{buchbinder2005online,buchbinder2009design}, which we recover when $z = 0$. 

In addition, we propose a minor modification to an online algorithm found in \cite{buchbinder2009design} to obtain a dedicated batched algorithm. The main idea is the simultaneous update of the dual variables that correspond to unsatisfied primal constraints, which is reminiscent of a primal-dual algorithm in \cite{williamson1995primaldual} for the generalized Steiner tree problem. Unfortunately, we are unable to analyze this algorithm in a way that exhibits the effects of the more generous (parametrized) bound. Alternatively, we provide computational results that suggest that while the greedy strategy proposed in \cite{bhawalkar2014online} is inoffensive for the worst case instance given $0 \leq z \leq 1$, it compromises on the competitive ratio obtained on the worst case instance given $z > 1$. Our experiments suggest that proposed algorithm improves on the competitive ratio obtained by the greedy strategy by some constant factor. 

The significance of this problem stems from the fact that, in some real-world applications, decisions are made after collecting a batch of data. Moreover, in many of these applications, batches of data are rarely produced by an absolute worst case adversary. The intent of our restricted adversarial model is to mimic the worst case instances that may effectively arise in the real world. For example, high-capacity ridesharing systems rely on the batch assignment of trip requests to vehicles \cite{alonso-Mora2017on-demand}, and some related problems such as that of optimal congregation points for passenger pickups and dropoffs can be modeled as a batched set cover problem. Intuitively, sequencing a batch of travel requests defeats the purpose of preparing the batch in the first place. Moreover, the batches that arise in this setting tend to have a VC-dimension greater than or equal to two, as the application revolves around exploiting the overlaps between distinct requests (see Section~\ref{sec: Preliminaries}). Our results formalize this intuition. As listed in \cite{bhawalkar2014online}, further examples of applications of the batched set cover problem may be found in distributed computing, facility planning, and subscription markets. 

To the best of our knowledge, we are the first to use the VC-dimension in the context of online (batched) covering problems. The VC-dimension has been used successfully in the context of approximation algorithms for (offline) set cover problems \cite{bronnimann1995almost,even2005hitting}, as well as in the context of improved running time bounds for unconstrained \cite{abraham2011vc} and constrained \cite{vera2017computing} shortest path algorithms. In both of these settings, the algorithms exploit the low VC-dimension of the set systems on which they operate. Perhaps surprisingly, algorithms for the batched set cover problem may instead exploit the high VC-dimension of the set systems on which they operate, which we model as a restriction on the adversary. Intuitively, the reason is that the adversary is forced to reveal complex, intertwined batches. A dedicated algorithm attempts to exploit the richness of the information revealed, while a greedy algorithm is myopic to the interactions between the set memberships of the elements in a batch.

\subsection{Organization}
In Section~\ref{sec: Preliminaries} we formally introduce our problems and definitions. In Section~\ref{sec: Fractional Online Set Cover} we consider bounds for the online fractional set cover problem. We present a known lower bound of $\Omega(H_m)$ on the competitive ratio of any online algorithm. While the tightness of this lower bound (up to constants) follows immediately from the existence of $O(H_m)$-competitive fractional algorithms \cite{buchbinder2005online,buchbinder2009design}, we present an inductive proof that shows the tightness of the lower bound without the explicit need of a competitive algorithm. This technique is used in Section~\ref{sec: Fractional Batched Set Cover} to show the tightness of a $\Omega(H_{m - 2^z + 1})$ lower bound on the competitive ratio of any batched algorithm, given our restricted adversary parametrized by $z$. The reason for doing this is our argument that batching may offer a constant factor improvement in the competitive ratio. Hence, tightness up to constants is not informative enough for our purposes. In Section~\ref{sec: Batched Algorithms} we formalize the greedy strategy suggested in \cite{bhawalkar2014online} and present our minor modification to an online algorithm found in \cite{buchbinder2009design}. We also present the results of our computational experiments.

\section{Preliminaries}
\label{sec: Preliminaries}

Let $x_j \in \{0,1\}$ be set to $1$ if $S_j$ is brought to the set cover and to $0$ otherwise. Now, consider \ref{lp: covering problem}, which describes the linear programming relaxation of the offline set cover problem. We refer to \ref{lp: covering problem} as the primal covering problem. Here, $c_j > 0$ refers to the cost of bringing some set $S_j \in \mathcal{S}$ to the set cover, and the objective is to minimize the total cost incurred. In the unweighted case, $c_j = 1$ for all $j = 1, \cdots, m$. Constraints $(1.1)$ ensure that every element $i = 1, \cdots, n$ in the ground set $X$ is covered. Note that the set membership information of each element is encoded in its respective constraint.
\begin{equation}
\tag{LP 1}
    \label{lp: covering problem}
    \begin{aligned}
    \begin{array}[t]{crl}
         \text{minimize} & \sum\limits_{j=1}^{m} c_j x_j &  \\
         \quad \text{s.t.} & & \\
         \quad (1.1) & \quad \sum\limits_{j: i \in S_j} x_j \geq 1, & i = 1, \cdots, n \\
         \quad     & \quad x_j \geq 0, & j = 1, \cdots, m \\
    \end{array}
    \end{aligned}
\end{equation}
The primal covering problem has an associated dual packing problem, described in \ref{lp: packing problem}. We refer to this primal-dual formulation throughout this work. We will refer to the collection of sets in $\mathcal{S}$ that individually contain $\sigma_i \in X$ by $\mathcal{S}(\sigma_i)$.
\begin{equation}
\tag{LP 2}
    \label{lp: packing problem}
    \begin{aligned}
    \begin{array}[t]{crl}
         \text{maximize} & \sum\limits_{i=1}^{n} y_i &  \\
         \quad \text{s.t.} & & \\
         \quad (2.1) & \quad \sum\limits_{i \in S_j} y_i \leq c_j, & j = 1, \cdots, m \\
         \quad     & \quad y_i \geq 0, & i = 1, \cdots, n \\
    \end{array}
    \end{aligned}
\end{equation}
In the fractional online setting \cite{buchbinder2005online,buchbinder2009design}, the objective function of \ref{lp: covering problem} is known a priori, but constraints $(1.1)$ are revealed one by one. This corresponds to the algorithm identifying the costs of the sets in $\mathcal{S}$ a priori, but the adversary revealing a sequence $\sigma = \sigma_1, \sigma_2, \cdots$ of elements of $X$, along with their respective set membership, in an online fashion. Equivalently, the right hand side of the constraints $(2.1)$ of \ref{lp: packing problem} are known a priori, but the variables involved in them and in the objective function are revealed one by one. 

Now, consider the following batched version of the set cover problem, which is also a game between an algorithm and an oblivious adversary. In the \textit{batched set cover problem}, $\mathcal{S}$ is identified a priori, but the adversary produces a sequence $\beta = \beta_1, \beta_2, \cdots$ of batches of elements of $X$, which it reveals one batch at a time. For instance, $\beta_k = \{\sigma_{k,1}, \cdots, \sigma_{k,|\beta_k|}\} \subseteq X$, where $|\beta_k|$ denotes the size of the $k$th batch. When a batch arrives, all of its elements, along with their respective set membership information, are revealed simultaneously. The fractional batched setting is analogous to the fractional online setting, except constraints $(1.1)$ appear in tandem. Equivalently, the variables involved in the objective function and constraints $(2.1)$ are revealed in tandem. Note that the online setting is trivially recovered when each batch is a singleton. We refer to the union of sets in $\mathcal{S}$ that individually cover the elements in $\beta_k = \{\sigma_{k,1}, \cdots, \sigma_{k,|\beta_k|}\}$, namely $\mathcal{S}(\sigma_{k,1}) \cup \cdots \cup \mathcal{S}(\sigma_{k,|\beta_k|})$, by $\mathcal{S}(\beta_k)$.

We define an instance $I$ of the online set cover problem as a collection $\mathcal{S}$ together with the adversarial sequence. We introduce the following performance measures. The batched setting for both of these measures is analogous.
\begin{definition}
An online algorithm $\text{ALG}^O$ is said to be $c$-competitive if for every instance $I$ of the problem it outputs a solution of cost at most $c \cdot \text{OPT}(I)$, where $\text{OPT}(I)$ is the cost of the optimal offline solution.
\end{definition}
\begin{definition}
An online adversary $\text{ADV}^O$ is said to be $c$-advantaged if it produces an instance $I$ such that every online algorithm $\text{ALG}^O$ outputs a solution of cost at least $c \cdot \text{OPT}(I)$, where $\text{OPT}(I)$ is the cost of the optimal offline solution.
\end{definition}

Our analysis in Section~\ref{sec: Fractional Batched Set Cover} relies on imposing a minimum on the VC-dimension of any batch $\beta_k$ produced by the adversary. The VC-dimension was first proposed by Vapnik and Chernovekis \cite{vapnik2015uniform}, and it is a widely used measure of complexity in computational learning theory. We work with the following definitions. 

\begin{definition}[Set System]
A set system $(X, \mathcal{S})$ is a ground set $X$ together with a collection $\mathcal{S}$ of subsets of $X$.
\end{definition}
\begin{definition}[Shattering]
A subset $B \subseteq X$ is said to be \textit{shattered} by $\mathcal{S}$ if $\{S \cap B : S \in \mathcal{S}\} = \mathcal{P}(B)$, where $\mathcal{P}(B)$ is the power set of $B$.
\end{definition}
\begin{definition}[VC-dimension]
The VC-dimension of a set system $(X, \mathcal{S})$ is the cardinality of the largest subset $B \subseteq X$ to be shattered by $\mathcal{S}$. We denote it by $\text{VCD}(X, \mathcal{S})$.
\end{definition}

In particular, upon the arrival of a batch $\beta_k$ we obtain a set system $(\beta_k, \mathcal{S})$, where $\mathcal{S}$ is known a priori. Moreover, note that \textit{i}) restricting the adversary to produce batches $\beta_k$ with VC-dimension $\text{VCD}(\beta_k, \mathcal{S}) \geq z $ is only meaningful when $m = |\mathcal{S}| \geq 2^{z}$; otherwise the adversary is unable to produce any batches, and \textit{ii}) by definition, any batch satisfying $\text{VCD}(\beta_k, \mathcal{S}) \geq z$ also satisfies $|\beta_k| \geq z$.  This is illustrated in Figure~\ref{fig: minimum m}, which showcases how a batch $\beta_k$ satisfying $\text{VCD}(\beta_k, \mathcal{S}) \geq z$ can be constructed with $m = 2^z$ and $|\beta_k| = z$. Observe that in each of the cases, $\beta_k$ is shattered since each of its subsets is the intersection of $\beta_k$ with some $S \in \mathcal{S}$. Of course, given $z$, there may be instances for which $m > 2^z$, or for which the adversary produces batches satisfying $|\beta_k| > z$, or both. We consider these cases in our analysis in Section~\ref{sec: Fractional Batched Set Cover}.

\begin{figure}
\begin{subfigure}{.45\textwidth}
\centering
\includegraphics[width=\linewidth]{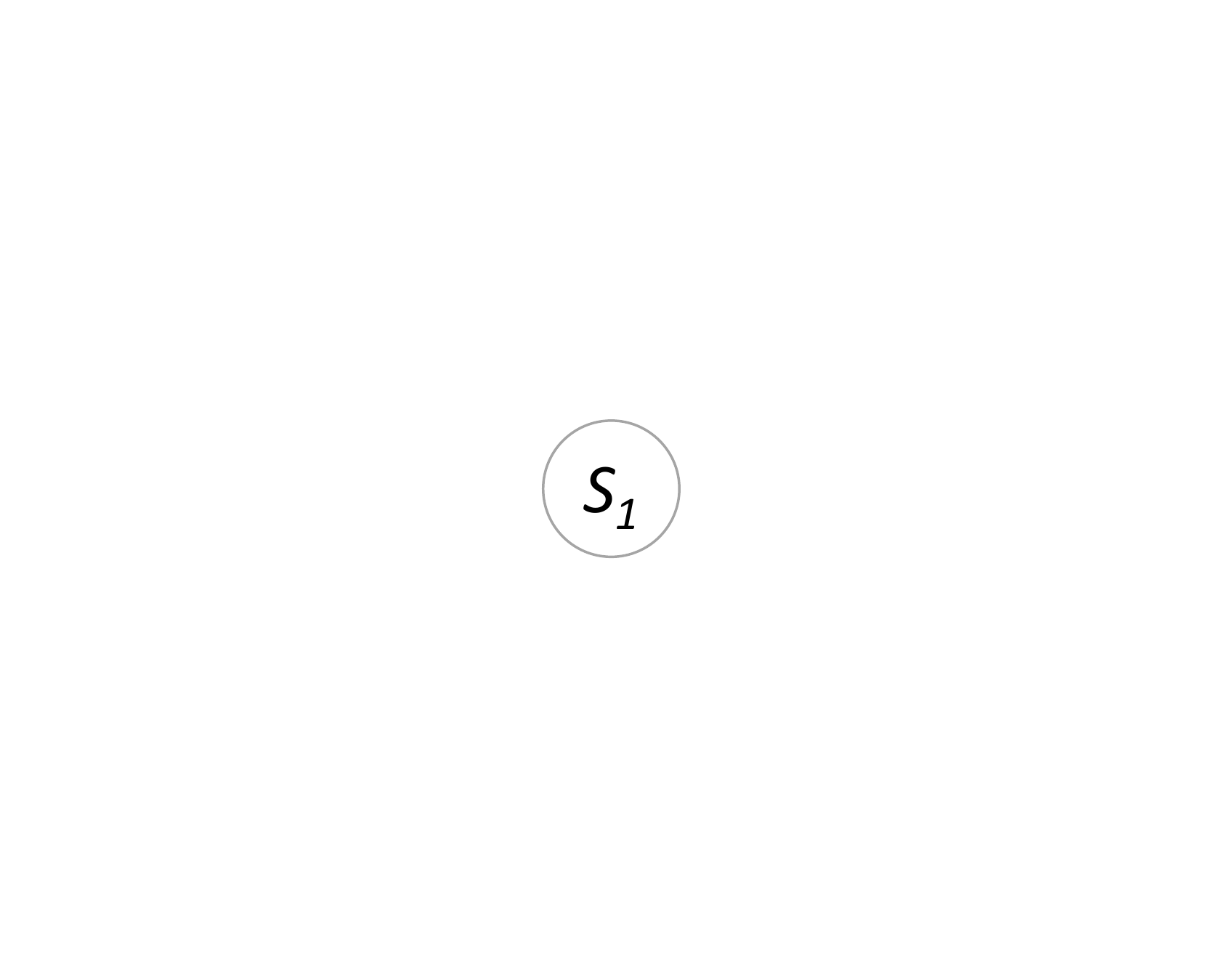}
\caption{$\text{VCD}(\beta_k, \mathcal{S}) \geq 0$.}
\end{subfigure}\hfill
\begin{subfigure}{.45\textwidth}
\centering
\includegraphics[width=\linewidth]{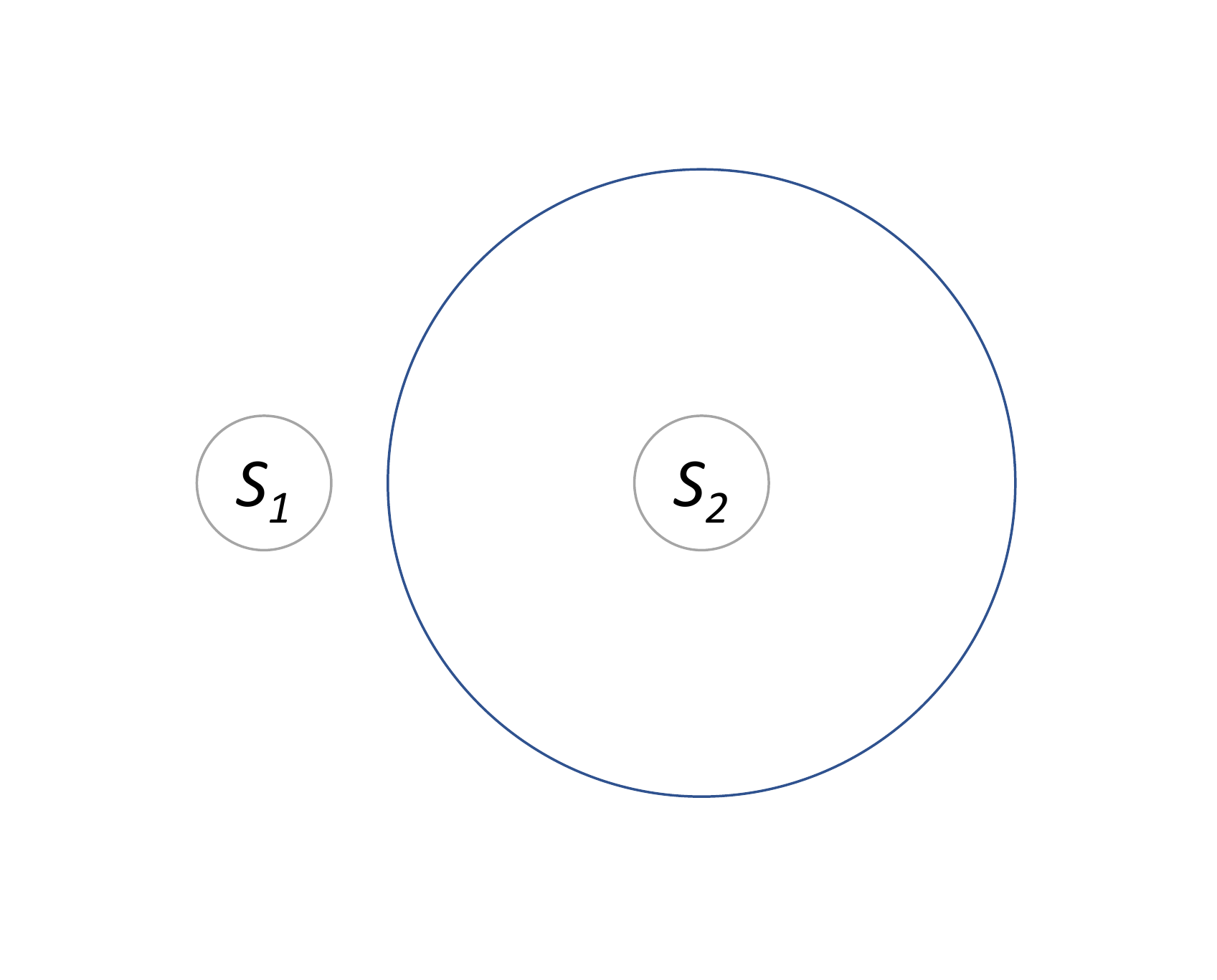}
\caption{$\text{VCD}(\beta_k, \mathcal{S}) \geq 1$.}
\end{subfigure}\hfill
\begin{subfigure}{.45\textwidth}
\centering
\includegraphics[width=\linewidth]{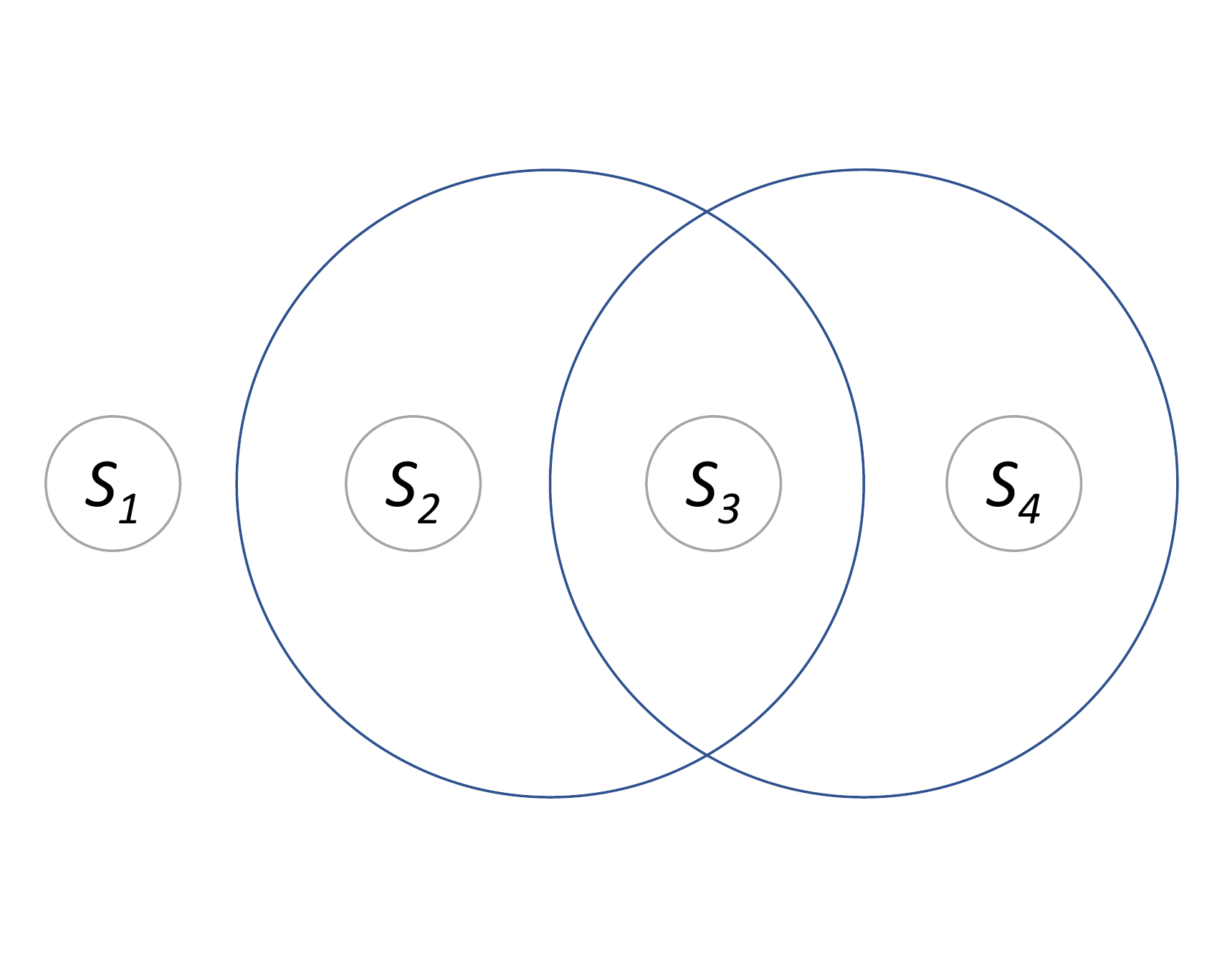}
\caption{$\text{VCD}(\beta_k, \mathcal{S}) \geq 2$.}
\end{subfigure}\hfill
\begin{subfigure}{.45\textwidth}
\centering
\includegraphics[width=\linewidth]{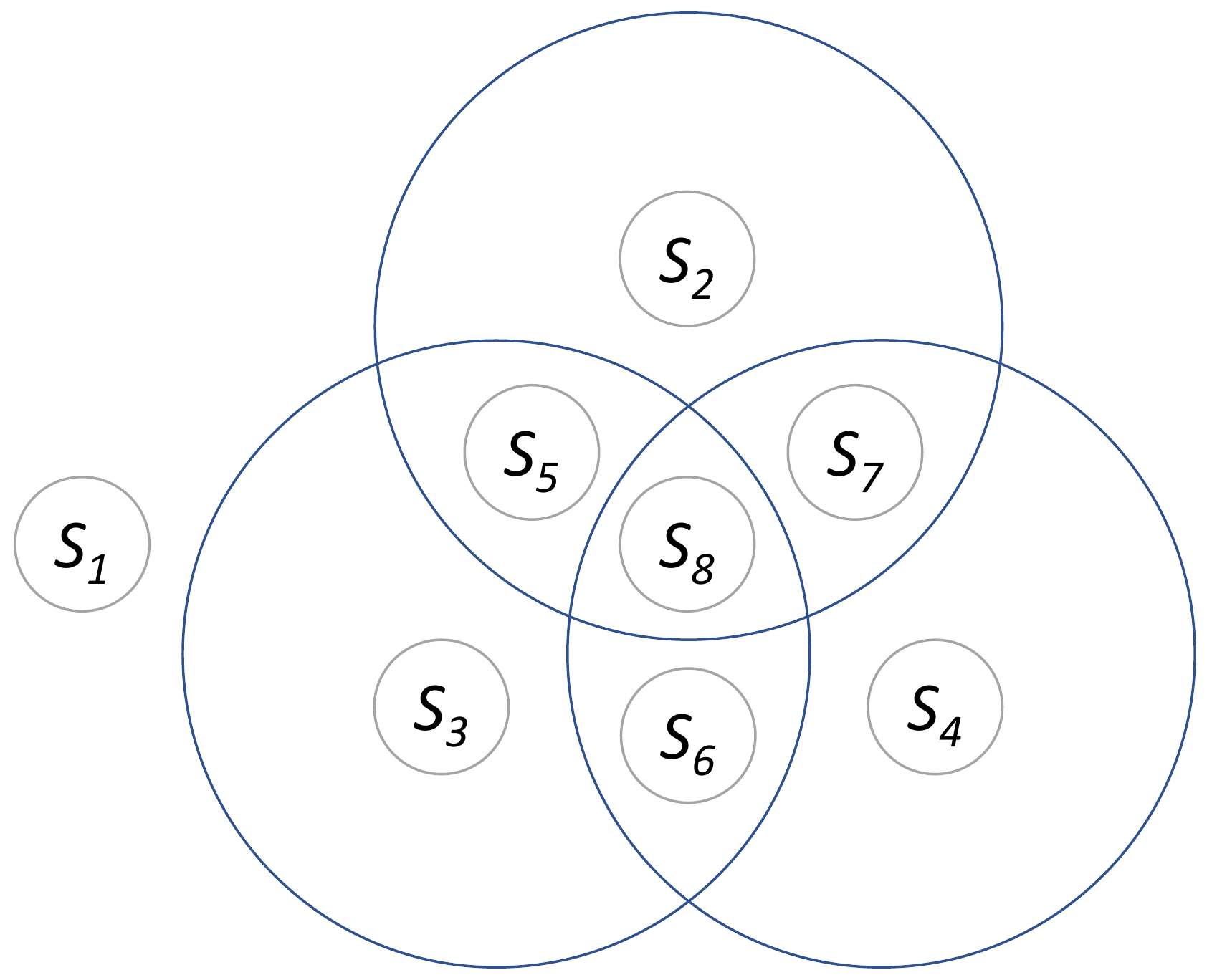}
\caption{$\text{VCD}(\beta_k, \mathcal{S}) \geq 3$.}
\end{subfigure}

\caption{Construction of $\mathcal{S}$ and $\beta_k$ satisfying $\text{VCD}(\beta_k, \mathcal{S}) \geq z$, for $z = 0, 1, 2, 3$, with the minimum possible cardinality requirements on $m = |\mathcal{S}|$ and $|\beta_k|$. The construction for $z > 3$ is analogous. The nodes labeled $S_1, \cdots, S_{m}$ represent the sets in $\mathcal{S}$, whereas the circles around them represent the elements of $\beta_k \subseteq X$ they contain (i.e., constraints in \ref{lp: covering problem}).}
\label{fig: minimum m}
\end{figure}

In a ridesharing context, we may interpret each $S \in \mathcal{S}$ as a possible congregation point (e.g., an intersection in a street network), whereas each constraint corresponds to the set of compatible congregation points (e.g., within walking distance) for each trip origin or destination. Note that the only reasons why we would have $\text{VCD}(\beta_k, \mathcal{S}) \leq 1$ are if \textit{i}) $|\beta_k| \leq 1$, or \textit{ii}) $|\beta_k| > 1$ but all the travel request form either a collection of proper subsets or a collection completely disjoint subsets of the possible congregation points. Given sufficiently high and heterogeneous demand, batches with $\text{VCD}(\beta_k, \mathcal{S}) \leq 1$ are unlikely to arise; the batches that arise look more like the those in Figure~\ref{fig: minimum m} (c) and (d). 


\section{Fractional Online Set Cover}
\label{sec: Fractional Online Set Cover}

\begin{lemma}[Variation of Buchbinder and Naor \cite{buchbinder2009design}]
\label{lemma: online lower bound}
There exists an instance $I^*$ of the unweighted fractional online set cover problem such that any online algorithm is $\Omega\left(H_m\right)$-competitive on this instance.
\end{lemma}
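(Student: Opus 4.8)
The plan is to build the hard instance with an adaptive adversary that maintains a shrinking collection of \emph{active} sets, and to charge cost to the sets as they are discarded so that the total charge telescopes into the harmonic series. Take all $m$ sets to be unit-cost and let $A_1 = \mathcal{S}$. The game proceeds in rounds $r = 1, \ldots, m$; at the start of round $r$ the active collection $A_r$ has size $m-r+1$. In round $r$ the adversary reveals a single new element contained in exactly the sets of $A_r$, which imposes the constraint $\sum_{S_j \in A_r} x_j \ge 1$. After the (monotone, irrevocable) response of the algorithm, the adversary identifies the active set of currently largest value, charges it, and removes it from the active collection to form $A_{r+1}$.

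The first key step is the per-round charge. Because the revealed constraint forces $\sum_{S_j \in A_r} x_j \ge 1$ over the $|A_r| = m-r+1$ active sets, an averaging argument guarantees that the maximum-value active set---the one the adversary removes---has value at least $1/(m-r+1)$ at the moment of removal. Crucially this uses only feasibility of the revealed constraint and the monotonicity of the variables, so it holds for \emph{any} online algorithm, not merely a tight one. The second step is the accounting: each of the $m$ sets is discarded in exactly one round, so the final fractional solution assigns to the set discarded in round $r$ a value at least its value at removal, i.e. at least $1/(m-r+1)$. Summing over $r$ gives total cost at least $\sum_{r=1}^{m} 1/(m-r+1) = \sum_{k=1}^m 1/k = H_m$. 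Equivalently, phrased inductively, if $C(k)$ denotes the cost forced on a game with $k$ active sets then the first round contributes $\ge 1/k$ and the residual game is the same game on $k-1$ sets, yielding $C(k) \ge 1/k + C(k-1)$ with base case $C(1) = 1$, hence $C(k) \ge H_k$.

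The third step is to certify $\mathrm{OPT}(I^*) = 1$. Since $A_1 \supseteq A_2 \supseteq \cdots$, the set discarded last lies in every $A_r$ and therefore contains every revealed element; setting its variable to $1$ is a feasible cover of unit cost, while any feasible solution must pay at least $1$ to cover the first element. Combining, any online algorithm pays $\ge H_m$ against $\mathrm{OPT} = 1$, establishing the $\Omega(H_m)$ competitive ratio.

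The main obstacle I anticipate is conceptual rather than computational: making the ``there exists an instance that is hard for any algorithm'' claim precise. A single pre-committed sequence cannot defeat a deterministic algorithm tailored to it, so the argument must genuinely exploit adaptivity---the adversary decides which active set to discard only after seeing the algorithm's response. This is legitimate here because the relevant fractional algorithms are deterministic, for which an oblivious adversary is equivalent to an adaptive one (as noted in the excerpt's footnote); the adaptively generated sequence is then the promised instance $I^*$. The remaining care is purely in the bookkeeping---ensuring the charged sets are distinct and that carried-over mass from earlier rounds never weakens the $1/(m-r+1)$ per-round bound, which it does not, since it can only increase the active-set sum.
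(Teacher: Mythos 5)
Your proposal is correct and follows essentially the same approach as the paper's proof: an adaptive adversary reveals, in each round, an element contained in exactly the not-yet-charged sets, a pigeonhole/averaging argument forces some active set to value at least $1/(m-r+1)$, that set is removed, the per-round charges over distinct sets sum to $H_m$, and $\mathrm{OPT}=1$ since the last surviving set contains every revealed element. Your explicit handling of the adaptive-vs-oblivious issue and the requirement that each element lie in \emph{exactly} the active sets makes precise what the paper leaves implicit, but the underlying argument is the same.
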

\begin{proof}
Consider the following instance $I^*$, which is particular to the sequence $\sigma$ produced by an adversary in response to an arbitrary online algorithm $\text{ALG}^O$. Let $\sigma_1 \in \bigcap_{S_j \in \mathcal{S}} S_j$. Upon the arrival of $\sigma_1$, $\text{ALG}^O$ must satisfy $\sum_{j: \sigma_1 \in S_j} x_j \geq 1$. Thus, it must let $x_j \geq 1/m$ for at least one $S_j \in \mathcal{S}$. Refer to such $S_j \in \mathcal{S}$ as $S^1$ and to its corresponding variable as $x^1$. Now, let $\sigma_2 \in \bigcap_{S_j \in \mathcal{S} \setminus S^1} S_j$. Upon the arrival of $\sigma_2$, $\text{ALG}^O$ must satisfy $\sum_{j: \sigma_2 \in S_j} x_j \geq 1$. Thus, it must let $x_j \geq 1/(m-1)$ for at least one $S_j \in \mathcal{S} \setminus S^1$. Again, refer to such $S_j \in \mathcal{S}\setminus S^1$ as $S^2$ and to its corresponding variable as $x^2$. In general, an adversary may continue revealing elements $\sigma_i$ satisfying
\begin{align*}
    \sigma_i \in \bigcap_{S_j \in \mathcal{S} \setminus \bigcup_{i' < i} S^{i'}} S_j,
\end{align*}
forcing the algorithm to let $x_j \geq 1/(m - i + 1)$ for at least one $S_j \in \mathcal{S} \setminus \bigcup_{i' < i} S^{i'}$. Refer to such $S_j \in \mathcal{S} \setminus \bigcup_{i' < i} S^{i'}$ as $S^i$ and to its corresponding variable as $x^i$. After $m$ steps, the total cost $\text{ALG}^O(I^*)$ incurred by the algorithm, namely $x^1 + \cdots + x^m$, is at least
\begin{align*}
    \frac{1}{m} + \frac{1}{m - 1} + \cdots + \frac{1}{2} + 1 = H_m. 
\end{align*}
Meanwhile, the total cost $OPT(I^*)$ incurred by an optimal offline solution is $1$, which corresponds to simply letting $x^m = 1$. Thus, $H_m \leq \frac{\text{ALG}^O(I^*)}{OPT(I^*)}$.
\end{proof}

Note that $I^*$ depends on $\text{ALG}^O$ only in the sense that the particular adversarial sequence $\sigma$ produced is a response to the particular algorithm; the lower bound on the competitive factor, on the other hand, is independent of the algorithm. Thus, we may parametrize the instance $I^*$ in Lemma~\ref{lemma: online lower bound} by $m = |\mathcal{S}|$ to obtain the instance class $I^*(m)$. In other words, $I^*(m)$ refers to the instances that produce a lower bound of $H_m$ on the competitive factor of any online algorithm as a result of the adversary following the strategy in the proof of Lemma~\ref{lemma: online lower bound}. If we vary $m$, we obtain the family of instance classes $\mathcal{I}^* = \{I^*(m) : m \in \mathbb{Z}_+\}$.

The tightness of this lower bound (up to constants) is an immediate result of the existence of $O(\log m)$-competitive algorithms for the fractional online set cover problem \cite{buchbinder2005online,buchbinder2009design}. In Lemma~\ref{lemma: online upper bound} we present a different approach to show that this lower bound is tight, this time without relying on a particular algorithm. We use Lemma~\ref{lemma: H_m}, whose proof is in the Appendix.

\begin{lemma}
\label{lemma: H_m}
$H_r > \frac{1}{2} \left( H_{r-t} + H_t \right)$ for any integers $r \geq t \geq 0$.
\end{lemma}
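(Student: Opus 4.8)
The plan is to reduce the inequality to a transparent statement about sums of positive reciprocals, avoiding any appeal to the concavity of the logarithm or to induction. Throughout I use $H_k = \sum_{i=1}^{k} 1/i$ with the convention $H_0 = 0$. First I would rearrange the desired inequality $H_r > \frac{1}{2}(H_{r-t} + H_t)$ into the equivalent form $2H_r - H_{r-t} - H_t > 0$, and then split the leading term as $2H_r = H_r + H_r$ so as to pair each copy of $H_r$ with one of the subtracted harmonic numbers, giving
\begin{align*}
2H_r - H_{r-t} - H_t = (H_r - H_{r-t}) + (H_r - H_t).
\end{align*}

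Next I would expand each parenthesized difference as a telescoped tail of the harmonic series:
\begin{align*}
H_r - H_{r-t} = \sum_{i=r-t+1}^{r} \frac{1}{i}, \qquad H_r - H_t = \sum_{i=t+1}^{r} \frac{1}{i},
\end{align*}
where each sum is interpreted as empty when its lower index exceeds $r$. The first sum contains $t$ terms and the second contains $r - t$ terms, so together they contribute exactly $r$ strictly positive terms. The crux of the argument is simply the observation that as long as $r \geq 1$ at least one of these two index ranges is nonempty, so the total is strictly positive; this holds uniformly across the boundary cases $t = 0$ and $t = r$, where one of the two sums degenerates to the whole tail $H_r$ and the other to an empty sum.

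I do not anticipate a genuine obstacle here — the only point requiring care is the bookkeeping at the endpoints of the range of $t$ (together with the degenerate case $r = t = 0$, where both sums are empty and the inequality collapses to equality; this case does not arise in the intended application, which invokes the lemma only for $r \geq 1$). An alternative route through the concavity of $x \mapsto \ln x$ or through induction on $r$ is available, but it forces one to cope with half-integer arguments or with a multi-case inductive step, whereas the telescoping identity above yields the strict inequality directly.
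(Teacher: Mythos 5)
Your proof is correct, and while it lives in the same elementary world as the paper's --- both reduce the inequality to positivity of tail sums of the harmonic series --- your decomposition is genuinely different and somewhat cleaner. The paper first assumes, without loss of generality, that $r - t \geq t$, cancels $H_t$ from both sides, and then compares tails of different lengths, asserting
\begin{align*}
\frac{1}{r} + \frac{1}{r-1} + \cdots + \frac{1}{t+1} \;>\; \frac{1}{2}\left(\frac{1}{r-t} + \frac{1}{r-t-1} + \cdots + \frac{1}{t+1}\right).
\end{align*}
Your pairing $2H_r - H_{r-t} - H_t = (H_r - H_{r-t}) + (H_r - H_t)$ is symmetric in $t$ and $r-t$, so it needs no WLOG step and no length comparison against a factor of $\tfrac{1}{2}$; it only needs the fact that a nonempty sum of positive terms is positive. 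It also surfaces a point the paper's proof glosses over: at $r = t = 0$ both tails are empty, the inequality collapses to equality, and so the lemma as literally stated (``for any integers $r \geq t \geq 0$'') fails at that single point; the paper's ``clearly'' step has the same defect there, since both of its sums are then empty. As you observe, this is harmless for the intended application --- in the proof of Lemma~\ref{lemma: online upper bound} the lemma is invoked only with $r > t \geq 0$, hence $r \geq 1$ --- but your version makes the required hypothesis $r \geq 1$ explicit rather than implicit, which is a small improvement in precision over the paper's argument.
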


\begin{lemma}
\label{lemma: online upper bound}
There exists an online algorithm $\text{ALG}^O$ for the unweighted fractional online set cover problem such that any adversary is $O(H_m)$-advantaged. In particular, this bound is matched by any adversary that follows the strategy in the family of instance classes $\mathcal{I}^*$, described in the proof of Lemma~\ref{lemma: online lower bound}.
\end{lemma}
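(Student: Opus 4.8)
The plan is to establish matching $O(H_m)$ and $\Omega(H_m)$ bounds on the value of the online game and then read off both assertions of the lemma from it. Concretely, I would define $V(m)$ to be the value of the online game restricted to instances with $|\mathcal{S}| = m$, that is, the infimum over fractional online algorithms of the supremum over adversaries of $\text{ALG}^O(I)/\text{OPT}(I)$. Lemma~\ref{lemma: online lower bound} already gives $V(m) \ge H_m$, so the work is to show $V(m) = O(H_m)$; the existence of the claimed algorithm then follows, and the asserted matching by $\mathcal{I}^*$ is immediate, since $\mathcal{I}^*$ is precisely the strategy attaining the lower bound.

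First I would reduce to instances with $\text{OPT}(I) = 1$. If the optimal offline cover uses sets $S^*_1, \dots, S^*_k$, I partition the revealed constraints according to an offline set covering them, obtaining $k$ subinstances each admitting a single covering set. Choosing the algorithm so that it only ever raises weights (monotone updates), its cost is subadditive across such a partition, since weight committed for one subinstance can only help satisfy constraints of another, whereas $\text{OPT}$ decomposes additively. Hence it suffices to bound the worst case when every revealed element lies in one common set $S^*$ and $\text{OPT}(I) = 1$, which is exactly the structural feature the $\mathcal{I}^*$ adversary exploits.

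Next I would set up the induction on $m$, viewing an arbitrary adversary as a decision process over the sets that remain candidates for being $S^*$. On the first constraint the algorithm must raise $\sum_{j \in \mathcal{S}(\sigma_1)} x_j$ to $1$, committing weight to some sets; the adversary then continues on a strictly smaller system obtained by declaring some committed set not to be $S^*$. A non-branching continuation reproduces the $\mathcal{I}^*$ chain and contributes the harmonic increment $1/(m-i+1)$ at step $i$, while a branching continuation, by the inductive hypothesis applied to subsystems of sizes $t$ and $m-t$, yields a value close to $\tfrac12(H_t + H_{m-t})$. This is exactly where Lemma~\ref{lemma: H_m} enters: since $H_m > \tfrac12(H_{m-t} + H_t)$, branching is strictly worse for the adversary than continuing the straight chain, so the supremum over adversary strategies is attained by the linear $\mathcal{I}^*$ strategy and the induction closes with $V(m) \le 2 H_m$.

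Finally, I would record tightness by feeding the $\mathcal{I}^*$ adversary into the resulting algorithm (equivalently, invoking Lemma~\ref{lemma: online lower bound}), giving $\text{ALG}^O(I^*)/\text{OPT}(I^*) = \Theta(H_m)$. I expect the main obstacle to be the inductive step itself: formalizing that an arbitrary adversary's optimal play decomposes into independent subsystems of sizes $t$ and $m-t$ to which the hypothesis applies, and proving via an exchange argument that interleaving or reusing sets across branches cannot help the adversary. Establishing this extremality of $\mathcal{I}^*$, rather than the arithmetic of Lemma~\ref{lemma: H_m}, is the crux; once the decomposition is justified, the harmonic inequality performs the combining automatically.
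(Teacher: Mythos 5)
Your skeleton---strong induction on $m$, with Lemma~\ref{lemma: H_m} showing that any ``branching'' continuation (disjoint subsystems of sizes $t$ and $m-t$, which raises $\text{OPT}$ by one) yields a ratio of roughly $\tfrac{1}{2}(H_t + H_{m-t}) < H_m$---is the same skeleton as the paper's proof, and $\mathcal{I}^*$ plays the same extremal role in both. The problem is that the step you yourself flag as ``the crux'' and ``the main obstacle''---showing that an adversary gains nothing from deviations whose set families overlap the $\mathcal{I}^*$ chain without being nested inside it---is exactly the step where the paper's proof does its real work, and your proposal supplies no mechanism for it. The paper's device is quantifier-based: by the definition of $c$-advantaged, the adversary must force cost at least $c \cdot \text{OPT}$ out of \emph{every} algorithm, so to cap its advantage it suffices to exhibit, per deviation, some algorithm behavior that defeats it. The paper invokes a randomized $\text{ALG}^O$ that with non-zero probability places all of its weight on sets in $\mathcal{S}(\sigma_i) \cap \mathcal{S}(\sigma_{i^*})$, i.e.\ ``disregards'' every set outside the nominal chain; this renders any non-nested deviation futile, forces the adversary into nested chains $\mathcal{S}(\sigma_{i'}) \subset \mathcal{S}(\sigma_i) \subset \mathcal{S}(\sigma_{i^*})$, and only then do the inductive hypothesis and Lemma~\ref{lemma: H_m} close the argument.

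Note moreover that your formulation actually forecloses the paper's device. You define $V(m)$ as an inf-sup---a single algorithm that is good against all adversaries simultaneously---which is stronger than what the paper's argument establishes (for each adversary, the existence of some algorithm outcome capping its advantage, i.e.\ a sup-inf bound). A single fixed algorithm cannot make the ``lucky'' disregarding choice against every adversary at once, so your exchange argument would essentially amount to constructing and analyzing a genuine $O(H_m)$-competitive algorithm, which is precisely what this lemma was designed to avoid (the paper states its purpose is to show tightness ``without the explicit need of a competitive algorithm''). Two further soft spots: the reduction to $\text{OPT}(I)=1$ is not free, since subadditivity of the algorithm's cost across an interleaved partition does not follow from monotone updates alone (the increment paid for a constraint depends on variable values raised while serving other parts of the partition); and once $\text{OPT}=1$ is enforced, every pair of revealed elements shares the common set $S^*$, so the branching case your induction analyzes can no longer arise---the two halves of the sketch sit uneasily together. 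The gap is therefore genuine: without the non-zero-probability device, or a worked-out substitute for it, the inductive step does not close.
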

\begin{proof}
By Lemma~\ref{lemma: online lower bound}, there exists an adversary that is $H_m$-advantaged, namely one that follows the strategy of instance class $I^*(m)$. We need to show that this is indeed the best an adversary can be guaranteed to achieve. We prove this by strong induction on $m$ and by focusing on an arbitrary adversary $\text{ADV}^O$. We will make use of the existence of a randomized algorithm $\text{ALG}^O$ that, in principle, produces specific outcomes with non-zero probability.

\textit{Base Case ($m=1$)}: When $\text{ADV}^O$ reveals any element $\sigma_1$, it must be the case that $\sigma_1 \in S_1$. Then, $\text{ALG}^O$ must let $x_1 = 1$, achieving a competitive factor of $H_1$, as in $I^*(1)$
    
\textit{Inductive Step}: Assume, by way of strong induction, than the statement is true for $m = 2, \cdots, w$. We need to show that the statement is true for $m = w + 1$. By Lemma~\ref{lemma: online lower bound}, there exists an adversary that is $H_{w+1}$-advantaged, namely one that follows the strategy of instance class $I^*(w+1)$. Now, consider the case in which $\text{ADV}^O$ deviates from the strategy of $I^*(w+1)$ on some arbitrary step $i$. Let $\sigma_{i^*}$ be the $i$th element according to the strategy of $I^*(w+1)$. If  $\text{ADV}^O$ reveals an element $\sigma_i$ such that $\mathcal{S}(\sigma_i) \cap \mathcal{S}(\sigma_{i^*})$ is empty, the cost of the optimal solution increases by 1, which by Lemma~\ref{lemma: H_m} irrevocably decreases the advantage of $\text{ADV}^O$. Therefore, suppose that $\text{ADV}^O$ reveals an element $\sigma_i$ such that $\mathcal{S}(\sigma_i) \cap \mathcal{S}(\sigma_{i^*})$ is non-empty. Then, with non-zero probability $\text{ALG}^O$ disregards all $S_j \in \mathcal{S}(\sigma_i) \setminus \mathcal{S}(\sigma_{i^*})$, if any, making such deviation futile. Thus, safely assume that $\text{ADV}^O$ instead reveals an element $\sigma_i$ such that $\mathcal{S}(\sigma_i) \subset \mathcal{S}(\sigma_{i^*})$. Let $t = |\mathcal{S}(\sigma_i)|$, $r = |\mathcal{S}(\sigma_{i^*})|$, and note that $r > t$. Let $i' > i$ be the first step after step $i$ such that $\mathcal{S}(\sigma_{i}) \cap \mathcal{S}(\sigma_{i'}) \neq \emptyset$. As before, with non-zero probability $\text{ALG}^O$ disregards all $S_j \in \mathcal{S}(\sigma_{i'}) \setminus \mathcal{S}(\sigma_i)$, if any, so safely assume that $\text{ADV}^O$ reveals an element $\sigma_{i'}$ such that $\mathcal{S}(\sigma_{i'}) \subset \mathcal{S}(\sigma_{i})$. Then, by the inductive hypothesis, given that element $\sigma_i$ satisfied  $\mathcal{S}(\sigma_i) \subset \mathcal{S}(\sigma_{i^*})$, the best $\text{ADV}^O$ can do is to recreate $I^*(t)$ on the remainder of the steps, starting with $i'$. In particular, the best $\text{ADV}^O$ can do is to reveal an element $\sigma_{i'}$ such that $\mathcal{S}(\sigma_{i'}) \subset \mathcal{S}(\sigma_{i})$ and $|\mathcal{S}(\sigma_{i'})| + 1 = |\mathcal{S}(\sigma_{i})|$. A symmetric argument can be made about concurrently recreating $I^*(r-t)$ on the remainder of the steps, which is disjoint from $I^*(t)$ after the $i$th step and hence increases the offline solution by one. However, by Lemma~\ref{lemma: H_m}, this achieves a strictly lower competitive advantage for $\text{ADV}^O$.

\end{proof}

\section{Fractional Batched Set Cover}
\label{sec: Fractional Batched Set Cover} 

\subsection{General Case}
\begin{lemma}
\label{lemma: batched lower bound}
There exists an instance $I^*$ of the unweighted fractional batched set cover problem such that any batched algorithm is $\Omega\left(H_m\right)$-competitive on this instance.
\end{lemma}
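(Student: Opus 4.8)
The plan is to reduce directly to the online lower bound of Lemma~\ref{lemma: online lower bound}, exploiting the observation made in Section~\ref{sec: Preliminaries} that the online setting is recovered when each batch is a singleton. Since the general case places no restriction on the VC-dimension of the batches produced by the adversary, singleton batches are admissible. I would therefore let the batched adversary present exactly the sequence $\sigma = \sigma_1, \sigma_2, \cdots$ of the instance class $I^*(m)$, but wrap each element in its own batch, i.e., set $\beta_i = \{\sigma_i\}$ for $i = 1, \cdots, m$.

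The key point is that a batched algorithm confronted with a sequence of singleton batches is in precisely the same informational and decision-theoretic position as an online algorithm: upon the arrival of $\beta_i = \{\sigma_i\}$ it learns the single constraint $\sum_{j : \sigma_i \in S_j} x_j \geq 1$ and must irrevocably raise the fractional values to satisfy it, with no interactions between multiple elements of a batch to exploit. Consequently the adversary's argument from the proof of Lemma~\ref{lemma: online lower bound} carries over verbatim: by choosing $\sigma_i \in \bigcap_{S_j \in \mathcal{S} \setminus \bigcup_{i' < i} S^{i'}} S_j$ at each step, the adversary forces the algorithm to set $x_j \geq 1/(m - i + 1)$ for at least one as-yet-untouched set, so that after $m$ batches the total cost is at least $H_m$, while the optimal offline cost remains $1$.

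The main thing to verify --- and the only subtlety, such as it is --- is that batching genuinely confers no extra power to the algorithm when batches are singletons, so that the forced lower bound on each $x^i$ is unaffected. This is immediate because each singleton batch corresponds to a single revealed constraint of \ref{lp: covering problem}, exactly as in the online model; there is no richer structure for the batched algorithm to use. Hence the constructed instance $I^*$ witnesses an $\Omega(H_m)$ lower bound on the competitive ratio of any batched algorithm, completing the proof. I do not expect any genuine obstacle here: the content of the lemma is essentially that the batched problem is at least as hard as the online problem in the unrestricted regime, and the reduction is the natural one. The more delicate analysis is deferred to the VC-dimension--restricted case of Section~\ref{sec: Fractional Batched Set Cover}, where singleton batches are forbidden and the inductive technique of Lemma~\ref{lemma: online upper bound} must be adapted.
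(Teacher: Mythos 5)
Your proposal is correct and matches the paper's own argument: the paper proves Lemma~\ref{lemma: batched lower bound} by noting that the fractional online problem is the special case of the batched problem with singleton batches, so the adversary of Lemma~\ref{lemma: online lower bound} carries over directly. You simply spell out this reduction in more detail than the paper does.
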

\begin{lemma}
\label{lemma: batched upper bound}
There exists a batched algorithm $\text{ALG}^B$ for the unweighted fractional batched set cover problem such that any adversary is $O(H_m)$-advantaged.
\end{lemma}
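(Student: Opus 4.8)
The plan is to show that the upper bound for the batched setting matches the online upper bound of Lemma~\ref{lemma: online upper bound}, without yet invoking any restriction on the VC-dimension of the batches. The key observation is that a batch is nothing more than a collection of primal constraints that are revealed simultaneously, and an algorithm is always free to process these constraints \emph{as if} they had arrived one at a time in some arbitrary internal order. Therefore I would define $\text{ALG}^B$ by artificially sequencing the elements within each batch $\beta_k = \{\sigma_{k,1}, \cdots, \sigma_{k,|\beta_k|}\}$ and feeding them, in that order, to the online algorithm $\text{ALG}^O$ guaranteed by Lemma~\ref{lemma: online upper bound}. This reduction immediately yields a batched algorithm, and the cost it incurs on any instance is exactly the cost $\text{ALG}^O$ would incur on the sequentialized version of that instance.

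The main step is then to argue that the adversarial advantage cannot increase under this reduction. First I would note that $\text{OPT}$ is unaffected by whether the constraints arrive in batches or one at a time: the offline optimum depends only on the final collection of constraints, namely on $\mathcal{S}$ and on $\bigcup_k \beta_k$ as a set of elements of $X$, and is oblivious to the order or grouping in which they were revealed. Consequently, every batched instance $I$ induces an online instance $I^O$ (obtained by serializing the batches in any fixed order) with $\text{OPT}(I^O) = \text{OPT}(I)$ and $\text{ALG}^B(I) = \text{ALG}^O(I^O)$. Applying Lemma~\ref{lemma: online upper bound} to $I^O$ gives $\text{ALG}^O(I^O) \leq O(H_m)\cdot \text{OPT}(I^O)$, and substituting the two equalities yields $\text{ALG}^B(I) \leq O(H_m)\cdot \text{OPT}(I)$ for every batched instance $I$. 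Since $I$ was arbitrary, no batched adversary can be better than $O(H_m)$-advantaged against $\text{ALG}^B$.

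I expect the only subtle point, rather than a genuine obstacle, to be the careful handling of the oblivious-adversary model: because the batched adversary is oblivious and $\text{ALG}^B$'s internal sequencing may depend on random choices, one must verify that serializing a batch does not hand the adversary extra power it did not already have in the batched game. This is intuitively clear, since the adversary commits to the entire batch before any of its constraints are processed and thus cannot react to the intermediate, sequentialized sub-decisions; but I would state it explicitly so that the reduction is faithful. The result is deliberately coarse: it recovers the $\Omega(H_m)$ lower bound of Lemma~\ref{lemma: batched lower bound} up to constants and confirms that batching, in the \emph{unrestricted} general case, offers no asymptotic improvement, thereby motivating the parametrized, VC-dimension-restricted analysis that follows.
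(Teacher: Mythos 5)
Your reduction architecture (serialize each batch and feed it to an online algorithm, noting that $\text{OPT}$ is invariant under serialization and that the oblivious batched adversary gains no power from it) is sound; it is essentially the ``trivial approach'' $\text{ALG}^{B,T}$ that the paper itself formalizes in Section~\ref{sec: Batched Algorithms}. The genuine gap is the step ``Applying Lemma~\ref{lemma: online upper bound} to $I^O$ gives $\text{ALG}^O(I^O) \leq O(H_m)\cdot \text{OPT}(I^O)$.'' Lemma~\ref{lemma: online upper bound} does not assert that any specific algorithm is $O(H_m)$-competitive; under the paper's definitions it bounds the \emph{advantage} of any adversary, a statement quantified over all algorithms, and it is deliberately proved \emph{without} exhibiting a competitive algorithm. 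The $\text{ALG}^O$ appearing in that proof is only required to produce specific favorable outcomes \emph{with non-zero probability}; that suffices to destroy an adversary's guarantee (being $c$-advantaged requires \emph{every} algorithm to pay at least $c\cdot\text{OPT}$ on the produced instance), but it is nowhere near a per-instance guarantee that $\text{ALG}^O$'s cost is at most $O(H_m)\cdot\text{OPT}$, even in expectation. So the inequality you substitute into the reduction is not available from the lemma you cite, and the proof as written does not close.

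The gap is repairable in two ways. Either invoke an actual $O(\log m)$-competitive fractional online algorithm \cite{buchbinder2005online,buchbinder2009design} --- the paper notes these exist, and serializing batches into one is exactly $\text{ALG}^{B,T}$ --- which proves the lemma as stated, since a single $O(H_m)$-competitive batched algorithm caps every adversary's advantage; or keep the argument at the advantage level: for any instance a batched adversary produces, its serialization is a valid online instance with the same $\text{OPT}$, so by Lemma~\ref{lemma: online upper bound} some execution of the randomized online algorithm pays less than $c\cdot\text{OPT}$ for any $c \gg H_m$, and composing with serialization yields a batched algorithm witnessing that the batched adversary is not $c$-advantaged. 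Note also that the paper's own route is different from yours: it reasons entirely on the adversary side, deriving Lemma~\ref{lemma: batched upper bound} from Lemma~\ref{lemma: online upper bound} together with the observation, made in the proof of Lemma~\ref{lemma: batched restricted upper bound} (instantiated at $z=0$), that an adversary can never be guaranteed to benefit from batching, because with non-zero probability the randomized algorithm disregards all sets outside a batch's common intersection, collapsing the batched game to the online one while preserving exact constants.
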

Lemma~\ref{lemma: batched lower bound} follows from the fact that the fractional online set cover problem is a special case of the fractional batched set cover problem, together with Lemma~\ref{lemma: online lower bound}. In Section \ref{sec: Restricted Adversary} we consider the case in which the adversary is imposed a minimum VC-dimension for any batch $\beta_k$ produced. In the proof of Lemma~\ref{lemma: batched restricted upper bound}, we mention why an adversary never benefits from producing batches with a VC-dimension larger than the minimum required. This, together with Lemma~\ref{lemma: online upper bound}, yields Lemma~\ref{lemma: batched upper bound}.

\subsection{Restricted Adversary}
\label{sec: Restricted Adversary}

Our intent now is to characterize instance classes that distinguish the fractional batched set cover problem from the fractional online set cover problem. In particular, we restrict the adversary in that it is forced to produce batches $\beta_k$ satisfying $\text{VCD}(\beta_k, \mathcal{S}) \geq z $, for some $z \in \mathbb{N}^0$. Given $z$, we assume that $m = |\mathcal{S}| \geq 2^z$ and $|\beta_k| \geq z$, as described in Section~\ref{sec: Preliminaries}. 
\begin{lemma}
\label{lemma: batched restricted lower bound}
There exists an instance $I_z^*$ of the unweighted fractional batched set cover problem, with an adversary satisfying $\text{VCD}(\beta_k, \mathcal{S}) \geq z $ for any batch $\beta_k$, such that any batched algorithm is $\Omega\left(H_{m - 2^z + 1}\right)$-competitive on this instance.
\end{lemma}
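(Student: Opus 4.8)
The plan is to adapt the adversarial strategy from the proof of Lemma~\ref{lemma: online lower bound} to the batched setting, accounting for the fact that each batch $\beta_k$ must satisfy $\text{VCD}(\beta_k, \mathcal{S}) \geq z$. The key observation is that forcing a VC-dimension of at least $z$ is precisely what prevents the adversary from squeezing out the full $H_m$ lower bound: to shatter a set of size $z$, the collection $\mathcal{S}$ restricted to that batch must realize all $2^z$ subsets, and this ``spends'' a block of sets whose individual covering constraints are correlated in a way the algorithm can exploit. Concretely, I would first construct an instance $I_z^*$ in which the adversary reveals batches rather than singletons, where a single batch of VC-dimension exactly $z$ can be chosen so that it occupies $2^z$ of the sets in $\mathcal{S}$ but, because of the shattering structure, only forces the algorithm to commit a total fractional mass comparable to covering \emph{one} distinguished element.

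The main steps, in order, are as follows. First I would show that among the $m$ sets, the shattering requirement effectively ``uses up'' $2^z - 1$ sets worth of slack at the top of the harmonic sum: the batch of VC-dimension $z$ lets the algorithm satisfy all constraints in the batch by placing mass on a common set (analogous to choosing $x^m = 1$ in Lemma~\ref{lemma: online lower bound}), so the adversary can no longer extract the leading terms $\tfrac{1}{m} + \cdots + \tfrac{1}{2^z}$ of the harmonic series from those sets. Second, I would run the online-style nested strategy of Lemma~\ref{lemma: online lower bound} on the remaining $m - 2^z + 1$ sets: the adversary reveals a sequence of elements (packaged into batches meeting the VC-dimension floor, padded if necessary using the shattering construction of Figure~\ref{fig: minimum m}) where the $i$th forced element lies in $\bigcap_{S_j \in \mathcal{S} \setminus \bigcup_{i' < i} S^{i'}} S_j$ over this reduced collection, forcing fractional mass at least $1/(m - 2^z + 1 - i + 1)$. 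Third, summing these contributions yields a total algorithm cost of at least $H_{m - 2^z + 1}$, while the optimal offline cost remains $1$, giving the claimed $\Omega(H_{m-2^z+1})$ competitive bound.

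The padding in the second step needs care: every batch the adversary produces must itself have VC-dimension at least $z$, so I cannot simply reveal singletons as in the online proof. I would handle this by augmenting each adversarial batch with $z$ auxiliary elements whose set memberships realize a shattered configuration (as in the minimal constructions of Figure~\ref{fig: minimum m}), chosen so that they are cheaply coverable by the same distinguished set and therefore do not reduce the adversary's advantage nor increase $\text{OPT}$. Showing that this augmentation is always feasible while keeping $\text{OPT}(I_z^*) = 1$ and not introducing covering mass the adversary would otherwise have extracted is the crux of the argument.

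I expect the main obstacle to be making the accounting rigorous at the interface between the ``shattering block'' and the ``harmonic block.'' Specifically, the delicate point is arguing that the $2^z$ sets consumed by the shattering requirement genuinely correspond to the \emph{largest} harmonic terms $\tfrac{1}{m}, \ldots, \tfrac{1}{2^z}$ being forfeited (so that the adversary's residual power is exactly $H_{m - 2^z + 1}$ and not something larger), rather than letting the adversary reuse those sets for later forcing steps. Establishing that the shattering constraints and the later covering constraints can be arranged to share sets only in a way that favors the algorithm, without accidentally lowering the optimum below the harmonic bound, is where the careful instance design and a possible inductive argument mirroring Lemma~\ref{lemma: online upper bound} will be required.
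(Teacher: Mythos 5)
Your proposal takes essentially the same approach as the paper's proof: each batch consists of $z$ auxiliary elements realizing the shattered configuration of Figure~\ref{fig: minimum m} plus one forcing element, the forcing elements run the nested online strategy of Lemma~\ref{lemma: online lower bound} over the remaining $m - 2^z + 1$ sets, and every element stays coverable by a single distinguished set so that $\text{OPT}(I_z^*) = 1$. The interface issue you flag as the crux is resolved in the paper by simply sliding the shattering block forward one set per batch (batch $k$ shatters on $S_k, \cdots, S_{2^z+k-1}$, with every element of $\beta_k$ also contained in $S_{2^z+k-1}, \cdots, S_m$, so reuse of already-committed sets is harmless), and your residual worry---that the adversary genuinely forfeits the terms $\frac{1}{m}, \cdots, \frac{1}{2^z}$ and cannot do better---is a tightness question, which belongs to Lemma~\ref{lemma: batched restricted upper bound} rather than to this lower bound.
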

\begin{proof}
This proof is analogous to that of Lemma~\ref{lemma: online lower bound}. Consider the following instance $I_z^*$, which is particular to the sequence $\beta$ produced by an adversary satisfying $\text{VCD}(\beta_k, \mathcal{S}) \geq z $ for any batch $\beta_k$, in response to an arbitrary batched algorithm $\text{ALG}^B$. In the following, our ordering of $S_1, \cdots, S_m \in \mathcal{S}$ is arbitrary.

Let $\beta_1 = \{\sigma_{1,1}, \cdots, \sigma_{1,z}, \sigma_{1,z+1}\}$ such that $\mathcal{S}(\{\sigma_{1,1}, \cdots, \sigma_{1,z}\})$ is as in the diagrams in Figure~\ref{fig: minimum m} on sets $S_1, \cdots, S_{2^z- 1},S_{2^z}$ with $\beta_1 \cap S_{2^z} = \beta_1$, in addition to each $\sigma_{1,q} \in \beta_1$ being also contained in each $S_j \in \{S_{2^z}, S_{2^z + 1}, \cdots, S_{m}\}$. This last property cannot decrease the VC-dimension, so $\beta_1$ is a valid batch. Then, because of the constraint corresponding to $\sigma_{1,z+1} \in \beta_1$, $\text{ALG}^B$ must let $x_j \geq \frac{1}{m - 2^z + 1}$ for at least one $S_j \in \{S_{2^z}, \cdots, S_{m}\}$. For clarity and without loss of generality, assume such $S_j$ is $S_{2^z}$.

Then, let $\beta_2 = \{\sigma_{2,1}, \cdots, \sigma_{2,z}, \sigma_{2,z+1}\}$ such that $\mathcal{S}(\{\sigma_{2,1}, \cdots, \sigma_{2,z}\})$ is as in the diagrams in Figure~\ref{fig: minimum m} on sets $S_2, \cdots, S_{2^z}, S_{2^{z}+1}$ with $\beta_2 \cap S_{2^{z}+1} = \beta_2$, in addition to each $\sigma_{2,q} \in \beta_2$ being also contained in each $S_j \in \{S_{2^z+1}, \cdots, S_{m}\}$. Then, because of the constraint corresponding to $\sigma_{2,z+1} \in \beta_2$, $\text{ALG}^B$ must let $x_j \geq \frac{1}{m - 2^z - 1 + 1}$ for at least one $S_j \in \{S_{2^z+1}, \cdots, S_{m}\}$. For clarity and without loss of generality, assume such $S_j$ is $S_{2^{z}+1}$.

In general, an adversary may continue revealing $\beta_k = \{\sigma_{k,1}, \cdots, \sigma_{k,z}, \sigma_{k,z+1}\}$ such that $\mathcal{S}(\{\sigma_{k,1}, \cdots, \sigma_{k,z}\})$ is as in the diagrams in Figure~\ref{fig: minimum m} on sets $S_{k}, \cdots$, $S_{2^z + k - 2}, S_{2^z + k - 1}$ with $\beta_k \cap S_{2^z + k - 1} = \beta_k$, in addition to each $\sigma_{k,q} \in \beta_k$ being also contained in each $S_j \in \{S_{2^z + k - 1}, \cdots, S_{m}\}$. Then, because of the constraint corresponding to $\sigma_{k,z+1} \in \beta_k$, $\text{ALG}^B$ must let $x_j \geq \frac{1}{m - 2^z - (k - 1) + 1}$ for at least one $S_j \in \{S_{2^z + k - 1}, \cdots, S_{m}\}$. For clarity and without loss of generality, assume such $S_j$ is $S_{2^z + k - 1}$.

After the $m - 2^z + 1$ possible steps, the total cost $\text{ALG}^B(I_z^*)$ incurred by the algorithm, namely $x_{2^z} + \cdots + x_{m}$, is at least
\begin{align*}
    H_{m - 2^z + 1} = \frac{1}{m - 2^z + 1} + \frac{1}{m - 2^z} + \cdots + \frac{1}{2} + 1.
\end{align*}
Meanwhile, the total cost $OPT(I_z^*)$ incurred by an optimal offline solution is $1$, which corresponds to simply letting $x^{m - 2^z + 1} = 1$. Thus, $H_{m - 2^z + 1} \leq \frac{\text{ALG}^O(I_z^*)}{OPT(I_z^*)}$. In simple terms, the adversary may capitalize on $S_j \in \{S_{2^z}, \cdots, S_{m}\}$ while assuming a sunk cost on $S_j \in \{S_1, \cdots, S_{2^z - 1}\}$.

\end{proof}
As in Section~\ref{sec: Fractional Online Set Cover}, we parametrize the instances $I_z^*$ in Lemma~\ref{lemma: batched restricted lower bound} by $m$. Thus, given $z$, we obtain the family of instance classes $\mathcal{I}_z^* = \{I_z^*(m) : m \in \mathbb{Z}_+\}$. We recover the general case when $z = 0$. Analogous to Lemma~\ref{lemma: online upper bound}, Lemma~\ref{lemma: batched restricted upper bound} shows the lower bound given $z$ is tight. Its proof can be found in the Appendix.


\begin{lemma}
\label{lemma: batched restricted upper bound}
There exists a batched algorithm $\text{ALG}^B$ for the unweighted fractional online set cover problem such that any adversary satisfying $\text{VCD}(\beta_k, \mathcal{S}) \geq z $ for any batch $\beta_k$ is $O\left(H_{m - 2^z + 1}\right)$-advantaged. In particular, this bound is matched by any adversary that follows the strategy in the family of instance classes $\mathcal{I}_z^*$, described in the proof of Lemma~\ref{lemma: batched restricted lower bound}.
\end{lemma}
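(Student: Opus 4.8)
The plan is to mirror the strong-induction argument of Lemma~\ref{lemma: online upper bound}, inducting on $m$ with $z$ held fixed and claiming that the best advantage any VC-constrained adversary can guarantee is $H_{m-2^z+1}$. The guiding principle is that the requirement $\text{VCD}(\beta_k,\mathcal{S}) \geq z$ forces every batch to carry a shattering structure on some $z$-subset $B \subseteq \beta_k$, and since shattering realizes $B$ itself as an intersection, there is always a set $S^\star \in \mathcal{S}$ with $B \subseteq S^\star$. The $2^z - 1$ sets other than $S^\star$ that the shattering consumes behave as a sunk cost, so that the capitalizable pool available to the adversary has size $m - 2^z + 1$ rather than $m$; this is exactly the shift exhibited by $I_z^*(m)$ in Lemma~\ref{lemma: batched restricted lower bound}.

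First I would dispense with two reductions that shrink the adversary's options without loss. (i) The adversary never benefits from producing a batch of VC-dimension $z' > z$: a shattered $z'$-subset consumes $2^{z'} - 1 > 2^z - 1$ sets, strictly shrinking the capitalizable pool and hence, since $H$ is increasing, the achievable advantage. (ii) Within a batch, the shattering elements force nothing beyond what the single free element forces on the pool, because each of them is covered by $S^\star$; hence a batch subject to the VC-constraint exerts the same forcing on the pool as one online element does in Lemma~\ref{lemma: online upper bound}, but on $m - 2^z + 1$ sets instead of $m$. These reductions let me treat each batch as a single online step on the reduced pool.

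With the reductions in hand, the base case $m = 2^z$ (pool size $1$) is immediate: the lone pool set $S^\star$ both receives the forced weight $x_{S^\star} \geq 1$ and covers the entire batch, so the advantage is $H_1$, matching $I_z^*(2^z)$. For the inductive step I would run the deviation analysis of Lemma~\ref{lemma: online upper bound} at the level of the free elements on the reduced pool. If a batch's free element has set system disjoint from that of the corresponding $I_z^*(m)$ step, the offline optimum increases by one and Lemma~\ref{lemma: H_m} strictly lowers the advantage; if it overlaps, the randomized $\text{ALG}^B$ disregards the off-strategy sets with positive probability, so we may assume a proper containment with pool-set-system cardinalities $t < r = m - 2^z + 1$. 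By the inductive hypothesis the adversary can at best concurrently recreate $I_z^*$-type instances of effective pool sizes $t$ and $r - t$, which by $H_r > \tfrac{1}{2}(H_{r-t}+H_t)$ yields strictly less than $H_{m-2^z+1}$.

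The step I expect to be the main obstacle is reduction (ii): rigorously showing that an arbitrary shattering structure—not merely the canonical one of Figure~\ref{fig: minimum m}—cannot be arranged so that the $z$ shattering elements force cost on the pool beyond the free element. The danger is an adversary that interleaves the shattering memberships with the pool so as to impose several tight constraints per batch; I would need to argue that the full-containment set $S^\star$ guaranteed by shattering, together with the algorithm's freedom to place all per-batch forced weight on $S^\star$, collapses these constraints into a single effective online constraint on the pool. Establishing this cleanly for all shattering configurations, and confirming that it composes correctly across the sliding windows used in $I_z^*$, is where the real work lies.
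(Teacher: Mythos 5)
Your outline coincides with the paper's proof in every structural respect: strong induction on $m$ with $z$ fixed, the vacuous case $m < 2^z$, the base case $m = 2^z$ in which a randomized $\text{ALG}^B$ concentrates with non-zero probability on the one set containing the whole batch, and an inductive step that reruns the deviation analysis of Lemma~\ref{lemma: online upper bound}: a disjoint deviation raises the offline optimum and loses by Lemma~\ref{lemma: H_m}, an overlapping deviation is made futile with non-zero probability, and what remains is a proper containment with $t < r$ followed by concurrent recreations of $I_z^*(t)$ and $I_z^*(r-t)$, again losing by Lemma~\ref{lemma: H_m}. Your reduction (i) is likewise the paper's closing remark: a batch of VC-dimension exceeding $z$ involves more sets and can only shrink what the adversary can capitalize on.

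However, the step you flag as ``where the real work lies'' --- your reduction (ii) --- is left unresolved in your proposal, so as written it is a genuine gap; the paper closes it with the same non-zero-probability device you already use across steps, applied \emph{inside} a batch. Let $\mathcal{S}^{\cap \beta_k} = \bigcap_{q=1}^{l} \mathcal{S}(\sigma_{k,q})$ be the collection of sets common to all elements of the batch. With non-zero probability, $\text{ALG}^B$ disregards every $S_j \in \mathcal{S}(\beta_k) \setminus \mathcal{S}^{\cap \beta_k}$; any weight placed on $\mathcal{S}^{\cap \beta_k}$ counts toward the covering constraint of every element of $\beta_k$ simultaneously, so against this realization the whole batch exerts exactly the forcing of a single online constraint supported on $\mathcal{S}^{\cap \beta_k}$. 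Since being $c$-advantaged requires a guarantee against every algorithm (and every outcome produced with non-zero probability), the adversary can never be guaranteed more than this, no matter how the shattering memberships are interleaved with the pool. This is why the paper works with the common intersection $\mathcal{S}^{\cap \beta_k}$ rather than your single witness $S^\star \supseteq B$: the argument never needs to examine the shattering configuration at all, and your worry about ``several tight constraints per batch'' dissolves, because any constraint not satisfied by weight on $\mathcal{S}^{\cap \beta_k}$ is one the adversary cannot be guaranteed to collect on. The quantity $t = |\mathcal{S}^{\cap \beta_k}|$ and the sunk cost of $2^z - 1$ sets then enter only through the inductive comparison $t < r$, exactly as in your outline, which otherwise goes through as written.
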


\section{Batched Algorithms}
\label{sec: Batched Algorithms}
\subsection{Analysis}
\label{sec: Analysis}

Note that since any batch $\beta_k$ could be artificially decomposed into a sequence of $|\beta_k|$ elements that are `revealed' one at a time, any competitive algorithm for online set cover would produce a competitive feasible solution; we refer to this as the `trivial' approach. More precisely, the trivial approach consists of two phases: \textit{i}) some (possibly randomized) subroutine that executes a mapping $f : \beta_k \to ( \sigma_{k,1'}, \cdots, \sigma_{k,b_k'})$, where $\sigma_{k,q'}$ is the $q'$th element of the artificial sequence, followed by \textit{ii}) any competitive algorithm for the online set cover problem. 

An example of such approach is the $O(\log m)$-competitive primal-dual subroutine in Algorithm~\ref{algo: Trivial Approach}, which we refer to as $\text{ALG}^{B,T}$, followed by any $O(\log n)$ rounding technique (i.e., the second phase of the primal-dual method). $\text{ALG}^{B,T}$ is a minor modification of the $O(\log m)$-competitive Algorithm 2 in Section 4.2 of \cite{buchbinder2009design} for the batched setting. Note that $d = \max_{k,q} |\mathcal{S}(\sigma_{k,q})| \leq m$. Its correctness follows immediately from Theorem 4.2 of \cite{buchbinder2009design}. For conciseness, we only mention that the proof relies on showing three claims together with weak duality: \textit{i}) the algorithm produces a primal feasible solution to \ref{lp: covering problem}, \textit{ii}) the algorithm produces a dual feasible solution to \ref{lp: packing problem}, and \textit{iii}) the objective value of \ref{lp: covering problem} is bounded above by $O(\log m)$ times the objective value of \ref{lp: packing problem}. Clearly, the complete algorithm is $O(\log m \log n)$-competitive.

\begin{algorithm}[ht]
\SetAlgoLined
\tcc{Upon the arrival of batch $k := \{ \sigma_{k,1}, \cdots, \sigma_{k,|\beta_k|}\}$:}
\For{$\sigma_{k,q} \in f\left( \{ \sigma_{k,1}, \cdots, |\beta_k| \} \right)$}{
    \While{$\sum\limits_{j: \sigma_{k,q} \in S_j} x_j < 1$}{
        \text{Increase $y_{k,q}$ continuously} \\
        $x_j \gets \frac{1}{d} \left[ \exp\left(\frac{\ln{\left(1+d\right)}}{c_j} \sum\limits_{\substack{k', q' : \\ \sigma_{k',q'} \in S_j}} y_{k', q'} \right) - 1 \right], \forall j: \sigma_{k,q} \in S_j$ \\
    }
}
\caption{$\text{ALG}^{B,T}$: `Trivial' batched algorithm.}
\label{algo: Trivial Approach}
\end{algorithm}

Nevertheless, unless $|\beta_k| = 1$ for all $k$, the trivial approach may fail to leverage the rich information that is possibly implicit in the fact that all elements of a given batch are revealed simultaneously. We refer to an algorithm that attempts to leverage any such information as a `dedicated' algorithm. We obtain such an algorithm if we replace $\text{ALG}^{B,T}$ with Algorithm~\ref{algo: Batched Approach}, which we refer to as $\text{ALG}^{B,D}$. 

\begin{algorithm}[ht]
\SetAlgoLined
\tcc{Upon the arrival of batch $k := \{\sigma_{k,1}, \cdots, \sigma_{k,|\beta_k|}\}$:}
\While{$\exists \sigma_{k,q} \text{ such that } \sum\limits_{j: \sigma_{k,q} \in S_j} x_j < 1$}{
    \text{Increase $y_{k,q}$ continuously}, $\forall q: \sum\limits_{j: \sigma_{k,q} \in S_j} x_j < 1$ \\
    $x_j \gets \frac{1}{d} \left[ \exp\left(\frac{\ln{\left(1+d\right)}}{c_j} \sum\limits_{\substack{k', q' : \\ \sigma_{k',q'} \in S_j}} y_{k', q'} \right) - 1 \right], \forall j: \exists \sigma_{k,q} \in S_j$ \\
}
\caption{$\text{ALG}^{B,D}$: `Dedicated' batched algorithm.}
\label{algo: Batched Approach}
\end{algorithm}

Note the difference in how the dual variables are updated: sequentially in $\text{ALG}^{B,T}$ and simultaneously in $\text{ALG}^{B,D}$. This is reminiscent of the approach of increasing multiple variables at once in a primal-dual algorithm by \cite{williamson1995primaldual} for the generalized Steiner tree problem. As expected, $\text{ALG}^{B,D}$ is also $O(\log m)$-competitive\footnote{To be precise, both algorithms are $2 \ln(1 + d)$-competitive.}; it is also a minor modification of the $O(\log m)$-competitive Algorithm 2 in Section 4.2 of \cite{buchbinder2009design}, and its correctness also follows immediately from Theorem 4.2 of \cite{buchbinder2009design}. Unfortunately, we are unable to improve its analysis in a way that reflects the intuition that batching should improve the competitive factor obtained under certain conditions. For example, we expect batching to help in the families of instances $\mathcal{I}_z^*$, shown in Section~\ref{sec: Fractional Batched Set Cover} to produce a more generous lower bound on the competitive factor of any algorithm due to the VC-dimension requirement on the batches produced by the adversary. We leave presenting such an analysis as an open problem. As an alternative, in Section~\ref{sec: Computational Experiments} we present the results of computational experiments that compare the performance (i.e., competitive factor) of $\text{ALG}^{B,T}$ and $\text{ALG}^{B,D}$ on instances of $\mathcal{I}_z^*$.

\subsection{Computational Experiments}
\label{sec: Computational Experiments}

Figure~\ref{fig: experiments} (a) and (b) present the results of computational experiments that compare the worst-case performance (i.e., competitive factor) of $\text{ALG}^{B,T}$ and $\text{ALG}^{B,D}$ obtained on instances of $\mathcal{I}_z^*$, respectively, for various values of $z$ and $m$. We discretize both algorithms with a step size of $\epsilon = 0.001$. We justify the use of instances of $\mathcal{I}_z^*$ by the fact that it is not some arbitrary family of instance classes; Lemma~\ref{lemma: batched restricted lower bound} ans Lemma~\ref{lemma: batched restricted upper bound} imply it produces tight bounds. Also, note that because of the symmetric nature of the batches $\beta_k$ that arise in $\mathcal{I}_z^*$, the particular order produced by the mapping $f : \beta_k \to ( \sigma_{k,1'}, \cdots, \sigma_{k,|\beta_{k}|'})$ in $\text{ALG}^{B,T}$ is irrelevant for evaluating the worst-case performance of the trivial approach so long as $\sigma_{k,z+1}$ is the last element `revealed'. This notion does not translate to $\text{ALG}^{B,D}$, since the dual variable updates occur simultaneously. Also, recall that for instances in $\mathcal{I}_z$ the optimal offline solution is one.

\begin{figure}
\centering
\begin{subfigure}{0.5\textwidth}
\centering
\includegraphics[width=\linewidth]{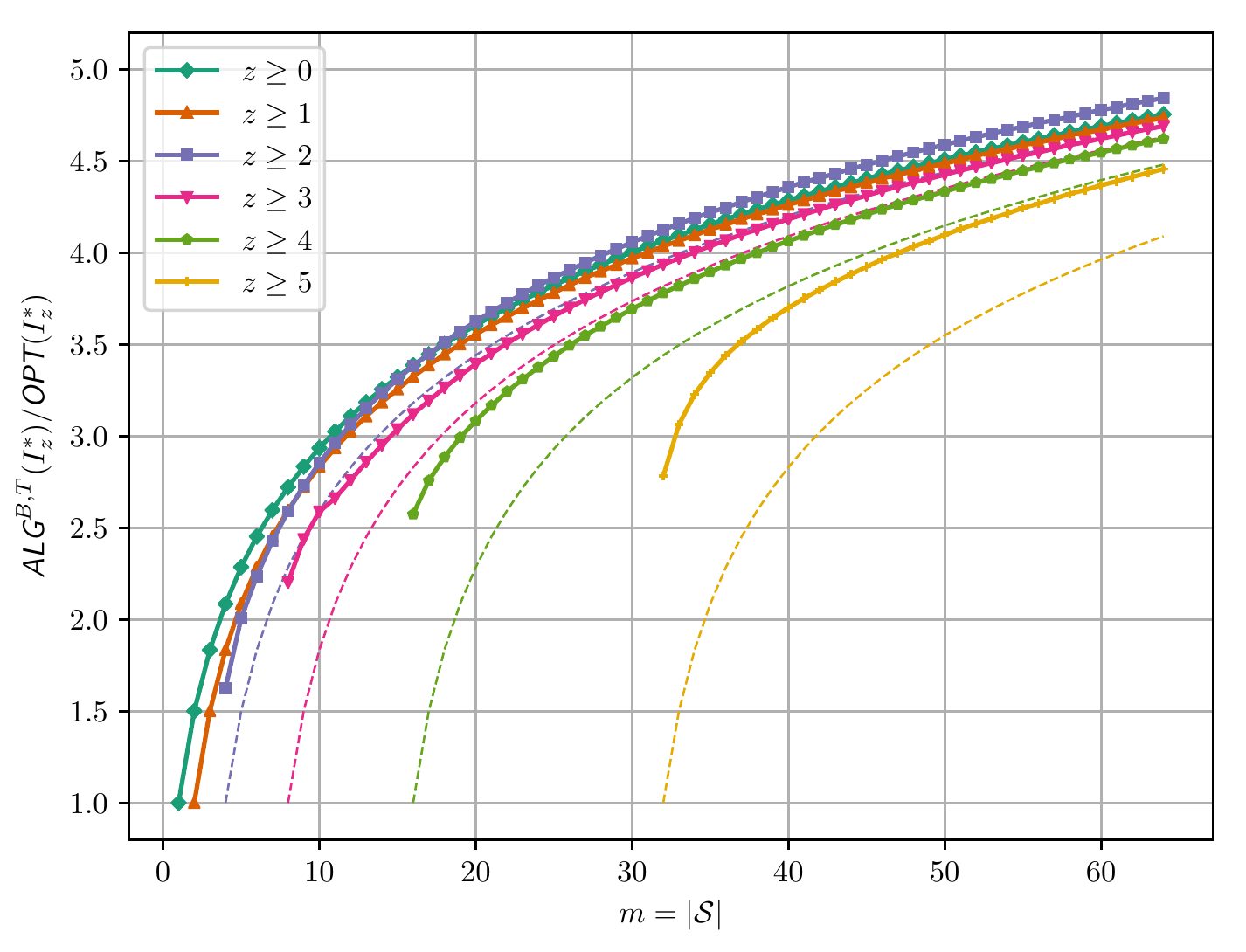}
\caption{$\text{ALG}^{B,T}$.}
\end{subfigure}\hfill
\begin{subfigure}{0.5\textwidth}
\centering
\includegraphics[width=\linewidth]{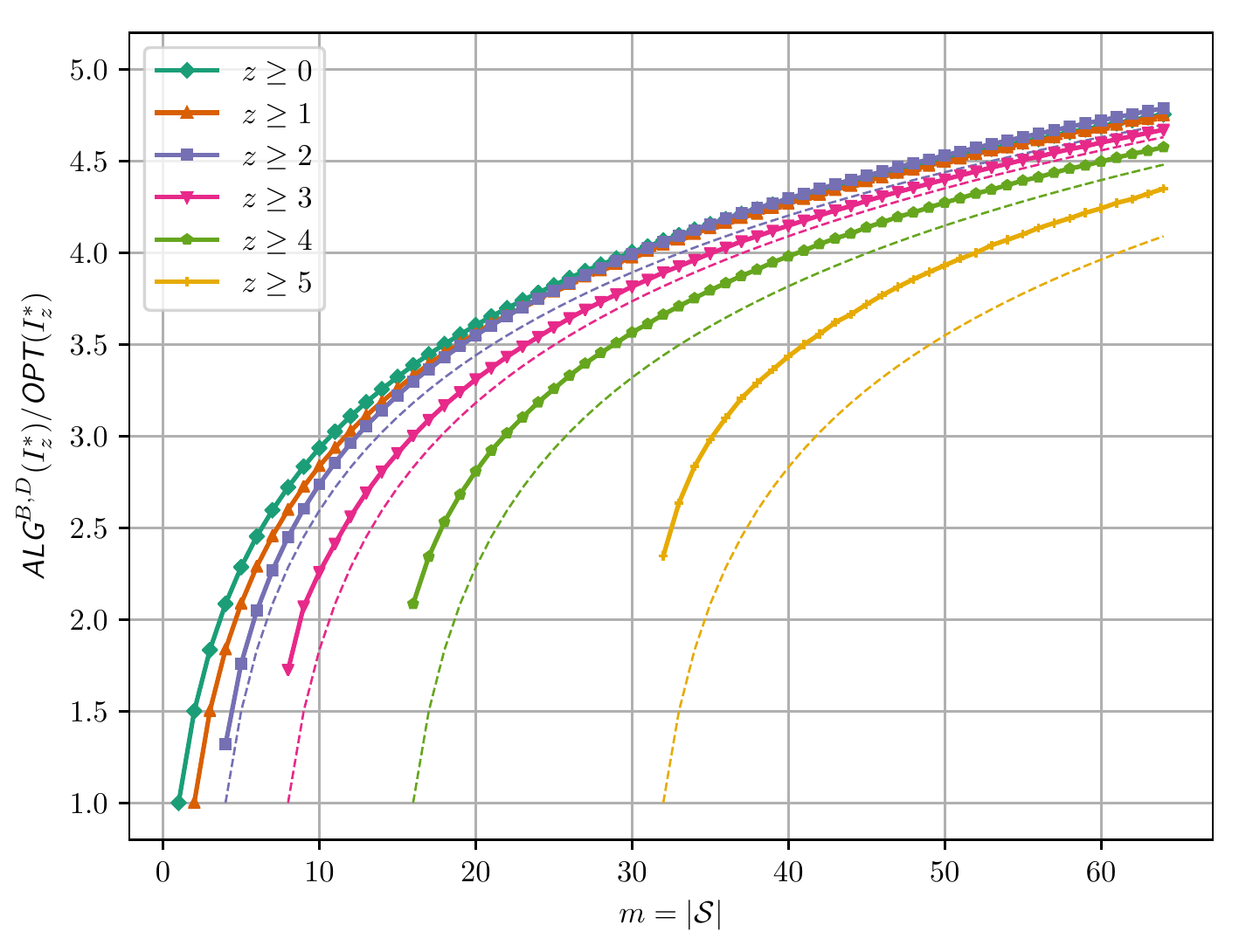}
\caption{$\text{ALG}^{B,D}$.}
\end{subfigure}\hfill
\caption{Competitive ratio of (a) $\text{ALG}^{B,T}$ and (b) $\text{ALG}^{B,D}$ obtained on instances of $\mathcal{I}_z^*$ for $z = 0, 1, 2, 3, 4$. The solid line indicates to the competitive ratio obtained by the algorithm, while the dashed line indicates the $\Omega(H_{m - 2^z + 1})$ lower bound on the competitive ratio of any batched algorithm.}
\label{fig: experiments}
\end{figure}

Note that for $z = 0, 1$, both algorithms match the theoretical lower bound. This is expected, as in both of these cases, per the description of $\mathcal{I}_z^*$, all the batches are singletons. On the other hand, for $z > 1$, neither of the algorithms match said lower bound. However, it can be observed that $\text{ALG}^{B,D}$ is closer than $\text{ALG}^{B,T}$ to the theoretical lower bound on the competitive ratio; this is not surprising as the algorithm has only be shown to be optimal up to constants. The improvement is more significant as $z$ increases, though it is only by a constant factor; both curves remain logarithmic. These results provide empirical evidence of the benefits of batching when the VC-dimension is high.

\section*{Acknowledgements}
The first author was funded through the Center for Transportation, Environment, and Community Health (CTECH) as well as a Federal Highway Administration (FHWA) Dwight David Eisenhower Transportation Fellowship.

\newpage

%
%
\bibliographystyle{splncs04}
\bibliography{bib}

\newpage

\section*{Appendix}
\subsection{Proof of Lemma~\ref{lemma: H_m}.}
\begin{proof}
Assume, without loss of generality, that $r - t \geq t$. Note that
\begin{align*}
    H_r 
    &= \frac{1}{r} + \frac{1}{r-1} + \cdots + \frac{1}{t+1} + H_t.
\end{align*}
Likewise, note that
\begin{align*}
    \frac{1}{2} \left( H_{r-t} + H_t \right) 
    &= \frac{1}{2} \left(\frac{1}{r-t} + \frac{1}{r-t - 1} + \cdots + \frac{1}{2} + 1 + 
    \frac{1}{t} + \frac{1}{t-1} + \cdots + \frac{1}{2} + 1\right) \\
    &= \frac{1}{2} \left(\frac{1}{r-t} + \frac{1}{r-t-1} + \cdots \frac{1}{t+1}\right) + \frac{1}{t} + \frac{1}{t-1} + \cdots + \frac{1}{2} + 1 \\
    &= \frac{1}{2} \left(\frac{1}{r-t} + \frac{1}{r-t-1} + \cdots \frac{1}{t+1}\right) + H_t.
\end{align*}
Clearly,
\begin{align*}
    \frac{1}{r} + \frac{1}{r-1} + \cdots + \frac{1}{t+1} > \frac{1}{2} \left(\frac{1}{r-t} + \frac{1}{r-t-1} + \cdots \frac{1}{t+1}\right),
\end{align*}
so the proof is complete.
\end{proof}

\subsection{Proof of Lemma~\ref{lemma: batched restricted upper bound}.}

\begin{proof}
This proof is analogous to that of Lemma~\ref{lemma: online upper bound}. By Lemma~\ref{lemma: batched restricted upper bound}, there exists an adversary satisfying $\text{VCD}(\beta_k, \mathcal{S}) \geq z $ for any batch $\beta_k$ that is $H_{m - 2^z + 1}$-advantaged, namely one that follows the strategy of instance class $I_z^*(m)$. We need to show that this is indeed the best an adversary can be guaranteed to achieve. We prove this by strong induction on $m$ and by focusing on an arbitrary batched adversary $\text{ADV}^B$. We will make use of the existence of a randomized algorithm $\text{ALG}^B$ that, in principle, produces specific outcomes with non-zero probability.

\textit{Base Case ($1 \leq m < 2^z$)}. If $m < 2^z$, the statement is vacously true because the adversary cannot produce any batches.

\textit{Base Case ($m=2^z$)}. When $\text{ADV}^B$ reveals any batch $\beta_1 = \{\sigma_{1,1}, \cdots, \sigma_{1,z}\}$, it must be the case that $\mathcal{S}(\beta_1)$ is as in the diagrams in Figure~\ref{fig: minimum m} on sets $S_1, \cdots, S_{m=2^z}$. Then, with non-zero probability, $\text{ALG}^B$ disregards any $S_j \notin \{S_{m}\}$. In such case, $\text{ALG}^B$ lets $x_{m} = 1$, achieving a competitive factor of $H_1$. This is in agreement with the description of $I_z^*(1)$.
    
\textit{Inductive Step}. Assume, by way of strong induction, than the statement is true for $m = 2^z + 1, \cdots, w$. We need to show that the statement is true for $m = w + 1$. By Lemma~\ref{lemma: batched restricted lower bound}, there exists an adversary that is $H_{w+1 - 2^z + 1}$-advantaged, namely one that follows the strategy of instance class $I_z^*(w+1)$. Now, consider the case in which $\text{ADV}^B$ deviates from the strategy of $I_z^*(w+1)$ on some arbitrary step $k$. Let $\beta_k= \{\sigma_{k,1}, \cdots, \sigma_{k,l}\}$, where $l \geq z$, and let $\beta_{k^*}= \{\sigma_{k^*,1}, \cdots, \sigma_{k^*,z}, \sigma_{k^*,z + 1}\}$ be the $k$th batch according to the strategy of $I_z^*(w+1)$. Further, denote  $\mathcal{S}^{\cap \beta_k} = \cap_{q = 1}^l \mathcal{S}(\sigma_{k,q})$, with $t=|\mathcal{S}^{\cap \beta_k}|$, as well as $\mathcal{S}^{\cap \beta_{k*}} = \cap_{q = 1}^z \mathcal{S}(\sigma_{k^*,z})$, with $r=|\mathcal{S}^{\cap \beta_{k*}}|$.  If $\mathcal{S}^{\cap \beta_{k}} \cap \mathcal{S}^{\cap \beta_{k*}} = \emptyset$, then the cost of the optimal offline solution increases by 1, which by Lemma~\ref{lemma: H_m} irrevocably decreases the advantage of $\text{ADV}^B$. Therefore, suppose that $\text{ADV}^B$ reveals a batch $\beta_k$ such that $\mathcal{S}^{\cap \beta_{k}} \cap \mathcal{S}^{\cap \beta_{k*}} \neq \emptyset$. Moreover, with non-zero probability $\text{ALG}^B$ disregards all $S_j \in \mathcal{S}(\beta_k) \setminus \mathcal{S}^{\cap \beta_{k}}$, so safely assume this is the case for the rest of the proof. Now, with non-zero probability $\text{ALG}^B$ disregards all $S_j \in \mathcal{S}^{\cap \beta_{k}} \setminus \mathcal{S}^{\cap \beta_{k*}}$, if any, making such deviation futile. Thus, safely assume that $\mathcal{S}^{\cap \beta_{k}} \subset \mathcal{S}^{\cap \beta_{k*}}$, implying that $t < r$. Let $k' > k$ be the first step after step $k$ such that $\mathcal{S}^{\cap \beta_{k}} \cap \mathcal{S}(\beta_{k'}) \neq \emptyset$. As before, with non-zero probability $\text{ALG}^B$ disregards all $S_j \in \mathcal{S}(\beta_{k'}) \setminus \mathcal{S}^{\cap \beta_{k'}}$, so safely assume that this is the case for the rest of the proof. Then, with non-zero probability $\text{ALG}^B$ also disregards any $S_j \in \mathcal{S}^{\cap \beta_{k'}} \setminus \mathcal{S}^{\cap \beta_{k}}$, if any, so safely assume that $\mathcal{S}^{\cap \beta_{k'}} \subset \mathcal{S}^{\cap \beta_{k}}$. Then, by the inductive hypothesis, given that batch $\beta_k$ satisfied  $\mathcal{S}^{\cap \beta_{k}} \subset \mathcal{S}^{\cap \beta_{k*}}$, the best $\text{ADV}^B$ can do is to recreate $I_z^*(t)$ on the remainder of the steps, starting with $k'$. In particular, this requires $\mathcal{S}^{\cap \beta_{k'}} \subset \mathcal{S}^{\cap \beta_{k}}$ and $|\mathcal{S}^{\cap \beta_{k'}}| + 1 = |\mathcal{S}^{\cap \beta_{k}}|$. A symmetric argument can be made about concurrently recreating $I_z^*(r-t)$ on the remainder of the steps, which is disjoint from $I_z^*(t)$ after the $k$th step and hence increases the offline solution by one. However, by Lemma~\ref{lemma: H_m}, this achieves a strictly lower competitive advantage for $\text{ADV}^B$.

Note that $\text{ADV}^B$ cannot be guaranteed to benefit from producing batches $\beta_k$ satisfying $\text{VCD}(\beta_k, \mathcal{S}) > z$ because with non-zero probability $\text{ALG}^B$ disregards any $S_j \in \mathcal{S}(\beta_k) \setminus \mathcal{S}^{\cap \beta_{k}}$, making such deviation futile with non-zero probability. In fact, a larger VC-dimension would involve more sets, possibly making $t=|\mathcal{S}^{\cap \beta_k}|$ (and hence the attainable competitive advantage) smaller.

\end{proof}

\end{document}